\DeclareMathOperator*{\argmin}{arg\,min}
\newtheorem{proposition}{Proposition}
\newtheorem{corollary}{Corollary}
\title{Flexible extreme thresholds through generalised Bayesian model averaging}
\author{Jessup, Sébastien\textsuperscript{1 *} \and Mailhot, Mélina\textsuperscript{1} \and Pigeon, Mathieu\textsuperscript{2}}
\date{\begin{flushleft}
        $^{1}${\fontsize{14}{11}\selectfont Department of mathematics, Concordia University, Montreal, Canada\\
        $^{2}$Département de mathématiques, UQAM, Montreal, Canada\\
		$^{*}$Corresponding author: sebastien.jessup@concordia.ca}
\end{flushleft}}
\begin{document}

\providecommand{\keywords}[1]{\textbf{Keywords } #1}
	
\maketitle

\begin{abstract}
    Insurance products frequently cover significant claims arising from a variety of sources. To model losses from these products accurately, actuarial models must account for high-severity claims. A widely used strategy is to apply a mixture model, fitting one distribution to losses below a given threshold and modeling excess losses using extreme value theory. However, selecting an appropriate threshold remains an open question with no universally agreed-upon solution. Bayesian Model Averaging (BMA) provides a promising alternative by enabling the simultaneous consideration of multiple thresholds. In this paper, we show that an error integration BMA algorithm can effectively detect heterogeneous optimal thresholds that adapt to predictive variables through the combination of mixture models. This method enhances model accuracy by capturing the full loss distribution and lessening sensitivity to threshold choice. We validate the proposed approach using simulation studies and an application to an automobile claims dataset from a Canadian insurer. As a special case, we also study the homogeneous setting, where a single optimal threshold is selected, and compare it to automatic selection algorithms based on goodness-of-fit tests applied to an actuarial dataset.
\end{abstract}

\begin{keywords}
	Bayesian Model Averaging; Extreme Values; Threshold Selection
\end{keywords}

\section{Introduction}\label{sec:Introduction}

Insurers are often exposed to significant losses from diverse sources, including injuries following accidents, crashes involving high-value vehicles, and natural catastrophes. In 2023, 23 of these events in Canada each resulted in over \$30 million in damages, contributing to nearly a quarter of the \$3.1 billion in total insured losses \citep{CatIQ2024}. As such, insurers require models that can capture both high-severity claims and the broader distribution of smaller claims to ensure a comprehensive risk assessment.

One approach to dealing with large losses is to use a branch of extreme value theory (EVT) called Peak-over-threshold. This approach examines values that exceed a specified level. For a sufficiently high limit, the excess values can be demonstrated to follow a generalised Pareto distribution (GPD), developed by \cite{Pickands1975}, with cumulative density function
\begin{equation}\label{eq:GPD}
	G(x) = \begin{cases}
		1-\left(1+\frac{\xi (x-u)}{\sigma}\right)^{-1/\xi} & \text{ for $\xi \ne 0$ and $x > u$}\\
		1-\exp\left(-\frac{x-u}{\sigma}\right) & \text{ for $\xi=0$ and $x > u$},
	\end{cases}
\end{equation}
with location, scale, and shape parameters $u, \sigma,$ and $\xi$. This approach can be used in a variety of contexts, such as extreme daily precipitation \citep{Thiombiano2017}, operational risk \citep{Chavez2006}, stock returns \citep{He2022}, and catastrophe risk \citep{Li2016}. The main challenge with this method is to identify a suitable threshold beyond which losses follow a GPD. Threshold selection, however, is an open problem with no universally accepted method. \cite{Caeiro2015} explain a few of the available methods based on a heuristic choice or minimisation of mean squared error.

What defines the optimal threshold varies depending on the method used, but it is generally accepted to represent the value which minimises the bias-variance tradeoff in the GPD tail of the distribution. For example, we can use EVT to determine this optimal value as the lowest point where the Mean Residual Life (MRL) plot becomes approximately linear \citep{Embrechts2013}.

Recently, in a homogeneous setting, automatic threshold selection methods have been proposed such as using L-moments \citep{Silva2020}, parameter stability \citep{Curceac2020}, goodness of fit \citep{Bader2018}, and other methods partly reviewed by \cite{Benito2023}, who find that different thresholds can yield similar market risk measures. These methods generally require establishing potential threshold values through a range of quantiles, from which a "best" threshold is chosen.

Although there is a significant body of literature devoted to identifying an optimal threshold, a question central to this article is whether it could be better to use model combination to simultaneously consider multiple threshold values. Such a combination could potentially be less affected by threshold misspecification than selecting a single best threshold. In fact, \cite{Northrop2017} recently used Bayesian Model Averaging (BMA), a method initially proposed by \cite{Raftery1997} that gained significant popularity in many scientific fields \citep{Fragoso2018}, to reduce sensitivity to threshold selection when studying the number of exceedances over a high threshold in ocean storms.

While \cite{Northrop2017} focus on the tail of the distribution, actuarial applications often require considering the entire range of potential values rather than only exceedances above a threshold. In such cases, models must capture the full distribution of the data, calling for an approach that models both the bulk and the tail. \cite{Laudage2019} proposed a mixture of a generalised linear model (GLM) and a GPD for a known threshold, which was expanded on by \cite{Ghaddab2023} in an excess-of-loss reinsurance context, where the threshold was estimated using MRL plots and Hill plot estimators. \cite{Li2023} proposed a three-part mixture model dividing low, medium and high claims according to a 20\%/60\%/20\% rule of thumb. While these approaches consider the full data, they are subject to threshold misspecification.

We propose using a tail-weighted version of a BMA algorithm, namely, the error integration algorithm proposed by \cite{Jessup2025} to combine mixture models with a single threshold, as suggested by \cite{Macdonald2011}. We will show that this allows for identifying the lowest threshold such that the distance between the fitted distribution and the observed tail data is minimised while improving the fit in the bulk of the distribution. We will also show that model combination is preferable to a single misspecified threshold. Furthermore, error integration allows for flexible thresholds depending on predictive variables.

As such, in this article, we propose a modification to a Bayesian model averaging algorithm to identify (flexible) optimal thresholds that allow the full distribution to be fitted, with an improved fit in the bulk of the distribution and reduced sensitivity to threshold selection. Section \ref{sec:Tail_weighted_BMA} establishes the theoretical results, Section \ref{sec:Data_application} illustrates these results using simulation studies and real datasets, and Section \ref{sec:Conclusion} concludes the article.

\section{Tail-weighted BMA for threshold selection}\label{sec:Tail_weighted_BMA}

To reduce the issue of threshold selection, we will focus on a mixture model proposed by \cite{Macdonald2011}. In a homogeneous setting, the authors suggested using a non-parametric distribution with kernels for the bulk of the data and a GPD for the tail beyond a threshold $u$. Replacing the kernel distribution by a parametric distribution, this has density
\begin{equation}\label{eq:mixture}
    f(y|\boldsymbol{\Lambda},u,\sigma,\xi)=\left\{
    \begin{array}{ll}
        (1-\phi_u) \times \frac{h(y|\boldsymbol{\Lambda})}{H(u|\boldsymbol{\Lambda})} & y \le u \\
        \phi_u \times g(y|u, \sigma, \xi) & y>u,
    \end{array}
    \right.
\end{equation}
where $h(y|\boldsymbol{\Lambda})$ is the bulk density with parameters $\boldsymbol{\Lambda}$, while $g(y|u,\sigma,\xi)$ is a GPD density with parameters $\sigma$ and $\xi$. Note that $\phi_u$ is simply the proportion of data above $u$ and is not a parameter as such. Furthermore, we can easily generalise (\ref{eq:mixture}) to take predictive variables into account To this end, consider a generalised mixture model with a fixed threshold such that
\begin{equation}\label{eq:general_mixture}
    f(y^{(k)}|\textbf{X}^{(k)},\boldsymbol{\Lambda}(\textbf{X}^{(k)}),u,\sigma(\textbf{X}^{(k)}),\xi(\textbf{X}^{(k)}))=\left\{
    \begin{array}{ll}
        (1-\phi_u) \times \frac{h(y|\textbf{X}^{(k)}, \boldsymbol{\Lambda}(\textbf{X}^{(k)}))}{H(u|\textbf{X}^{(k)}, \boldsymbol{\Lambda}(\textbf{X}^{(k)}))} & y \le u \\
        \phi_u \times g(y^{(k)}|u, \sigma(\textbf{X}^{(k)}), \xi(\textbf{X}^{(k)})) & y>u,
    \end{array}
    \right.
\end{equation}
where the prediction depends on the characteristics $\textbf{X}^{(k)}$ for the $k^\text{th}$ observation. Bulk distributions where parameters depend on predictive variables can be modeled by Generalised Additive Models for Location, Scale, and Shape (GAMLSS, see \cite{Rigby2005}). For the excess over the threshold, we can use Generalised additive extreme value models (evgam, see \cite{Youngman2022}).

\subsection{Mixture model combination}

We want to combine mixture models in order to reduce sensitivity to threshold selection. To this end, we use a BMA algorithm proposed by \cite{Jessup2025}. Fundamentally, weights are attributed to each of $M$ mixture models $\mathcal{M}_m$ to form a linear combination such that
\begin{equation}\label{eq:linearcombination}
	f(y) = \sum_{m=1}^{M}w_m f_m (y),
\end{equation}
where $\sum w_m=1$ and $f_m$ is the distribution under model $\mathcal{M}_m$. BMA sets the weights $w_m$ as the probability that each model is the true model given the observed data, or
\begin{equation}\label{eq:BMA}
	w_m = \Pr(\mathcal{M}_m|\mathcal{D}) = \frac{\Pr(\mathcal{D}|\mathcal{M}_m)\Pr(\mathcal{M}_m)}{\sum_{l=1}^M \Pr(\mathcal{D}|\mathcal{M}_l)\Pr(\mathcal{M}_l)},
\end{equation}
where $\Pr(\mathcal{D}|\mathcal{M}_m)$ is the likelihood of data $\mathcal{D}$ under $\mathcal{M}_m$, with prior probability $\Pr(\mathcal{M}_m)$.

\cite{Jessup2025} propose to obtain weights using uncertainty quantification, where modelling random error allows for numerical integration over this error, such that
\begin{equation}\label{eq:errorintegration}
    w_m = \Pr(\mathcal{M}_m|\mathcal{D}) \approx \frac{1}{S} \sum_{s=1}^S \frac{1}{|\mathcal{D}|} 
 \sum_{y^{(k)} \in \mathcal{D}} \frac{\Pr(y^{(k)}|\epsilon_s^{(k)},\mathcal{M}_m)\Pr(\mathcal{M}_m)}{\sum_{l=1}^M \Pr(y^{(k)}|\epsilon_s^{(k)},\mathcal{M}_l)\Pr(\mathcal{M}_l)},
\end{equation}
with $\mathcal{D}$ the observed data, $y^{(k)}$ the $k^\text{th}$ observation, $\epsilon_s^{(k)}$ the $s^\text{th}$ realisation of random error $\epsilon^{(k)}$, and $S$ the number of simulations. In the context of mixture models, random error can be assumed to follow a skew-normal for the bulk of the distribution, and a GEV distribution for the extreme value tail.

Furthermore, although these values account for only a small fraction of events, they represent the scenarios that require the most attention. As such, we want to ensure that this portion of the distribution is well-modelled. By taking a weighted mean over the $K$ different observations such that
\begin{equation}\label{eq:weightederrorintegration}
    w^*_m = \Pr\hspace{0.1em}^* (\mathcal{M}_m|\mathcal{D}) \approx \frac{1}{S} \sum_{s=1}^S \sum_{y^{(k)} \in \mathcal{D}} \frac{y^{(k)}}{\sum_{i=1}^K y^{(i)}} \frac{\Pr(y^{(k)}|\epsilon_s^{(k)},\mathcal{M}_m)\Pr(\mathcal{M}_m)}{\sum_{l=1}^M \Pr(y^{(k)}|\epsilon_s^{(k)},\mathcal{M}_l)\Pr(\mathcal{M}_l)},
\end{equation}
the weight to the $m^\text{th}$ model will depend more on the tail of the distribution. Note that the choice of weights is motivated by the intuition of losses receiving a weight corresponding to their percentage of total loss, and that other reweighting methods are possible.

\cite{Jessup2025} further suppose that the vectors of $\mathbf{w}_m^{(k)}$ follow a Dirichlet distribution, which allows for a regression depending on predictive variables such that
\begin{equation}\label{eq:Dirichletcombination}
    f(y^{(k)}) = \sum_{m=1}^M w^{(k)}_m f_m(y^{(k)}).
\end{equation}
This is particularly promising in the context of extreme value analysis, where comparing (\ref{eq:errorintegration}) and (\ref{eq:weightederrorintegration}) allows for Proposition \ref{prop:Dirichletthreshold}.

\begin{proposition}\label{prop:Dirichletthreshold}
    Let $M$ different models $\mathcal{M}_m$, $m \in \{1, \ldots, M\}$ be generalised mixture models as defined by equation \ref{eq:general_mixture} $\forall ~m$, with each model having a different predetermined threshold $u_m$ such that the models cover a wide range of possible threshold values from low to high values, such that the optimal threshold is in this range. Let $w_m^{(k)}$ and $w_m^{*(k)}$ be the weights to the $m^\text{th}$ model under respectively the mean and weighted-mean Dirichlet regression for the $k^\text{th}$ observation. Further let $u^{(k)}$ be the optimal threshold that minimises the GPD bias-variance tradeoff for the $k^\text{th}$ observation. Then, $w_m^{*(k)} \ge w_m^{(k)}$ if $u_m \ge u^{(k)}$, and $w_m^{*(k)} \le w_m^{(k)}$ if $u_m \le u^{(k)}$.
\end{proposition}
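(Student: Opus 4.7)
The plan is to reduce the statement to a signed-covariance between the per-observation normalised posterior $p_m^{(k)}$ and the observation magnitude $y^{(k)}$, and then to invoke the bias-variance optimality of $u^{(k)}$ to fix the sign of that covariance for each model.

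First, I would rewrite the difference between the mean and weighted-mean weights before pushing them through the Dirichlet regression. Letting $p_m^{(k)}(\epsilon_s^{(k)})$ denote the inner summand common to \eqref{eq:errorintegration} and \eqref{eq:weightederrorintegration}, a direct subtraction gives
\begin{equation*}
    w^*_m - w_m \approx \frac{1}{S}\sum_{s=1}^S \sum_{y^{(k)}\in\mathcal{D}} \left(\frac{y^{(k)}}{\sum_i y^{(i)}} - \frac{1}{|\mathcal{D}|}\right) p_m^{(k)}(\epsilon_s^{(k)}).
\end{equation*}
The reweighting coefficient is positive for above-average observations and negative for below-average ones, so the sign of $w^*_m - w_m$ is that of the empirical covariance between $y^{(k)}$ and $p_m^{(k)}$. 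To transfer this structure to the Dirichlet regression I would use that the two regressions share the same design matrix and differ only through their observation weights, so the predicted difference $w_m^{*(k)} - w_m^{(k)}$ at $\mathbf{X}^{(k)}$ is, up to the Dirichlet link, a smoothed version of the same covariance restricted to a neighbourhood of $\mathbf{X}^{(k)}$.

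Second, I would sign this covariance using the definition of $u^{(k)}$. For observations with $y^{(k)}>u^{(k)}$ the mixture density \eqref{eq:general_mixture} reduces in the tail to $\phi_{u_m} g(y^{(k)}|u_m,\sigma,\xi)$; models with $u_m \ge u^{(k)}$ evaluate $y^{(k)}$ under a GPD fit on genuinely extreme data, whereas models with $u_m < u^{(k)}$ use a GPD biased by bulk contamination, so by the bias-variance minimality of $u^{(k)}$ the former has at least as large expected likelihood and hence at least as large $p_m^{(k)}$. A symmetric argument for $y^{(k)}\le u^{(k)}$ shows that $p_m^{(k)}$ is at least as large for models with $u_m \le u^{(k)}$, since such thresholds leave the bulk density essentially unaffected by truncation while avoiding the noisy tail region. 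Combined, these make $p_m^{(k)}$ a monotone function of $y^{(k)}$ whose direction is fixed by the sign of $u_m - u^{(k)}$, and a Chebyshev-type rearrangement inequality then turns the signed sum above into $w_m^{*(k)} \ge w_m^{(k)}$ when $u_m \ge u^{(k)}$ and the reverse inequality when $u_m \le u^{(k)}$.

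The hard part will be the monotonicity step. For a single realisation $\epsilon_s^{(k)}$ and a single observation the required inequality is essentially a restatement of the bias-variance optimality of $u^{(k)}$; the difficulty is to show that it survives averaging over the error samples, over the observations in a neighbourhood of $\mathbf{X}^{(k)}$, and through the Dirichlet regression link. I plan to handle these in sequence, first establishing pointwise monotonicity from the defining property of $u^{(k)}$, then using Jensen together with the concavity of the relevant log-likelihood contributions to preserve the inequality under averaging, and finally invoking the local smoothness of the Dirichlet regression to transfer the sign to the predicted weights $w_m^{(k)}$ and $w_m^{*(k)}$. Everything else in the argument should then reduce to routine algebra.
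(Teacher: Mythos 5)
Your core decomposition is the same one the paper uses, just made explicit: the difference $w_m^{*}-w_m$ is governed by the covariance between the reweighting factor $y^{(k)}/\sum_i y^{(i)} - 1/|\mathcal{D}|$ and the per-observation posterior $p_m^{(k)}$, and its sign is fixed by noting that above the optimal threshold the reweighting factor is positive while, by Pickands--Balkema--de Haan, models with $u_m \ge u^{(k)}$ have lower tail bias and hence larger average likelihood. Up to that point you match the paper's argument, and your covariance/Chebyshev framing is arguably tidier than the paper's informal ``on average \ldots must increase relative to.''

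The divergence --- and the gap --- is in how you handle the $u_m \le u^{(k)}$ direction. The paper argues nothing about the bulk: it obtains $w_1^{*(k)} \le w_1^{(k)}$ for the low-threshold model purely from the normalization $\sum_m w_m^{*(k)} = 1$ (for two models, with the $M$-model case asserted by induction). You instead claim a symmetric dominance for $y^{(k)} \le u^{(k)}$, namely that low-threshold models have larger $p_m^{(k)}$ there because such thresholds ``leave the bulk density essentially unaffected by truncation.'' This is backwards. A lower threshold truncates the bulk \emph{more} severely: the bulk density in equation (\ref{eq:general_mixture}) is renormalized by $H(u_m|\cdot)$ and fit only to $y \le u_m$, and the paper itself notes that this right truncation biases the bulk parameters downward. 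Moreover, observations falling between $u_m$ and $u^{(k)}$ are evaluated by the low-threshold model's GPD, which is precisely the bulk-contaminated, biased fit. So the monotonicity of $p_m^{(k)}$ in $y^{(k)}$ that your Chebyshev step requires is not established on the bulk side, and the mechanism you offer for it is wrong. If you drop the bulk claim and close the argument as the paper does, via $\sum_m w_m^{*(k)}=1$, you recover the two-model result; note, however, that for $M>2$ normalization alone yields only an aggregate statement, which is why the paper resorts to a per-model induction and why a fully rigorous version of either argument would need more care at that step.
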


\begin{proof}
    Consider a combination of two mixture models such that $u_1 < u^{(k)}$ and $u_2 \ge u^{(k)}$, for $u^{(k)}$ the optimal threshold minimising the GPD bias-variance tradeoff for the $k^\text{th}$ observation $y^{(k)}$.
    
    From the Pickands-Balkema-de-Haans theorem \citep{Coles2001}, $\mathcal{M}_2$ must have lower bias than $\mathcal{M}_1$ when $y^{(k)} > u^{(k)}$. This implies that on average, 
    \begin{equation*}
        |y^{(k)}-\hat{y}^{(k)}_1| > |y^{(k)}-\hat{y}^{(k)}_2|,
    \end{equation*}
    such that from the combination hypotheses of error integration, on average for $y^{(k)} > u^{(k)}$,
    \begin{equation*}
        \Pr(y^{(k)}|\epsilon^{(k)}_s,\mathcal{M}_1) < \Pr(y^{(k)}|\epsilon^{(k)}_s,\mathcal{M}_2).
    \end{equation*}
    Furthermore, for these values,
    \begin{equation*}
        \frac{y^{(k)}}{\sum_{i=1}^K y^{(i)}} > \frac{1}{|\mathcal{D}|}.
    \end{equation*}
    Comparing equations (\ref{eq:errorintegration}) and (\ref{eq:weightederrorintegration}), it then follows that $w^{*(k)}_2$ must increase relative to $w^{(k)}_2$. Finally, since $\sum w^{*(k)}_m=1$, $w^{*(k)}_1$ must decrease relative to $w^{(k)}_1$.

    This can be generalised to $M$ models by induction. For each additional model, if $u_m \ge u^{(k)}$ ($u_m < u^{(k)}$), then from the same argument $w^{*(k)}_m \ge w^{(k)}_m$ ($w^{*(k)}_m \le w^{(k)}_m$). It directly follows that the result holds for all $M$ models.
\end{proof}

Proposition \ref{prop:Dirichletthreshold} allows for identifying flexible thresholds based on covariates while providing a distribution for the full range of data. This represents an improvement over methods like quantile regression, where, for instance, \cite{Youngman2019} outlines a two-step procedure: using a quantile regression to identify thresholds and then fitting a GAM version of the GPD to the excesses over those thresholds. Such methods, while efficient, do not model the full distribution.

In the absence of predictive variables, residuals cannot be directly computed for each observation. However, bootstrapping provides a way to evaluate residuals around each quantile, allowing us to adapt equations (\ref{eq:errorintegration}) and (\ref{eq:weightederrorintegration}) by considering quantiles instead of individual observations. Through bootstrapping, we can estimate both variance and skewness at each quantile, which allows us to use the error integration framework with quantiles instead of observations. This approach relaxes the assumption of asymptotic normality in quantile estimates \citep{Vandervaart2000}, addressing potential skewness that arises from limited data availability.

To distinguish between different forms of error integration, we define the homogeneous quantile-based approach as hom-EI and the heterogeneous version as het-EI. Note that hom-EI is very similar to Generalised Uncertainty Likelihood Estimation (GLUE, e.g. \cite{Jessup2023}, \cite{Zhu2013}), except that we allow for a skew-normal distribution around the quantiles rather than a normal distribution. Further details on the skew-normal distribution are provided in Appendix \hyperref[sec:skewnorm]{I}, while a skew-normal version of the GLUE algorithm is in Appendix \hyperref[sec:SNGLUE]{II}. Importantly, hom-EI enables Corollary \ref{cor:GLUE}.

\begin{corollary}\label{cor:GLUE}
    Let $M$ different models $\mathcal{M}_m$, $m \in \{1,\ldots, M\}$, be mixture models as defined by equation (\ref{eq:mixture}) $\forall ~m$, with each model having a different predetermined threshold $u_m$ s.t. the models cover a wide range of possible thresholds from low to high values, such that the optimal threshold is in this range. Let $w_m$ and $w^*_m$ be the weights to the $m^\text{th}$ model under respectively hom-EI and weighted hom-EI. Further let $u$ be the optimal threshold that minimises the GPD bias-variance tradeoff. Then, $w^*_m \ge w_m$ if $u_m \ge u$, and $w^*_m \le w_m$ if $u_m \le u$.
\end{corollary}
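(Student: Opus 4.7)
The plan is to adapt the argument used for Proposition \ref{prop:Dirichletthreshold} to the homogeneous quantile-based setting. Since hom-EI differs from heterogeneous error integration only in that uncertainty is quantified at bootstrap-derived empirical quantiles rather than at individual observations, the structural argument transfers directly, with the empirical quantile $\hat{y}^{(q)}$ playing the role of the observation $y^{(k)}$ and the bootstrap replicates playing the role of the simulated error realisations.

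First, I would take two mixture models $\mathcal{M}_1, \mathcal{M}_2$ with thresholds $u_1 < u$ and $u_2 \geq u$. For any empirical quantile exceeding $u$, the Pickands-Balkema-de-Haan theorem guarantees that the GPD tail of $\mathcal{M}_2$ approximates the true excess distribution more closely than the misspecified tail of $\mathcal{M}_1$, so on average the skew-normal bootstrap likelihood at these upper quantiles satisfies $L(\hat{y}^{(q)}_1) < L(\hat{y}^{(q)}_2)$. This is the homogeneous analogue of the inequality $\Pr(y^{(k)}|\epsilon_s^{(k)},\mathcal{M}_1) < \Pr(y^{(k)}|\epsilon_s^{(k)},\mathcal{M}_2)$ used in the proposition.

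Second, I would verify that for these same upper quantiles, the tail-oriented reweighting used in weighted hom-EI places strictly more mass than the uniform normalisation used in hom-EI; concretely, $\hat{y}^{(q)}/\sum_{j=1}^Q \hat{y}^{(j)} > 1/Q$ whenever $\hat{y}^{(q)}$ exceeds the grid average of quantile values, which is automatic in the tail region containing $u$. Substituting both inequalities into the hom-EI and weighted hom-EI weight expressions yields a strict increase in the contribution to $w^*_2$ relative to $w_2$, and the unit-sum constraint then forces $w^*_1 \leq w_1$. Induction over the remaining $M-2$ models, with each additional threshold classified as lying above or below $u$, closes the argument exactly as in the proposition, since the sign of the weight shift for model $m$ depends only on the ordering of $u_m$ against $u$.

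The main delicate step is the first one: Pickands-Balkema-de-Haan is an asymptotic result, so for finite bootstrap samples the likelihood ordering need only hold in expectation over bootstrap replicates. This matches the \emph{on average} qualifier carried over from the proposition's proof and relies on the standing assumption that the optimal threshold $u$ lies in the interior of the range spanned by $u_1, \ldots, u_M$, so that candidate models straddle $u$ on both sides and the comparison is non-trivial.
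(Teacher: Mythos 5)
Your proposal matches the paper's own proof, which simply notes that the argument of Proposition \ref{prop:Dirichletthreshold} carries over verbatim once the $k^\text{th}$ observation is replaced by the $q^\text{th}$ quantile and lower bias is translated into higher average (bootstrap) likelihood. You have written out exactly that substitution in more detail, including the same reweighting inequality and the same induction step, so the approach is essentially identical.
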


\begin{proof}
    The proof is very similar to the proof of Proposition \ref{prop:Dirichletthreshold}. Instead of considering the $k^{th}$ observation, we consider the $q^\text{th}$ quantile. The same reasoning applies, where lower bias implies higher average likelihood.
\end{proof}

In particular, Corollary \ref{cor:GLUE} allows for comparison with automatic threshold selection algorithms. Note that the proofs of Proposition \ref{prop:Dirichletthreshold} and Corollary \ref{cor:GLUE} both assume that the optimal threshold lies within the wide range of considered thresholds. It is not intuitively clear what would happen when all models have thresholds above or below the best threshold, highlighting the importance of considering low to high values. Further, due to data uncertainty, thresholds should be sufficiently spaced out to have significant difference between the models.

Proposition \ref{prop:Dirichletthreshold} and Corollary \ref{cor:GLUE} allow us to identify optimal (flexible) thresholds minimising the bias-variance tradeoff for the GPD tail. While this is certainly desirable, a natural question in the context of model combination is whether combined models can outperform the model fitted with the right threshold. When fitting the mixture model in equation (\ref{eq:mixture}), the threshold automatically implies truncation of the left part of the data. This truncation in turn means that the parameters obtained through MLE for the bulk of the data will be biased. Note that while this bias can be reduced by using censored MLE (see for example \cite{Zeng2007}), since the truncated observations affect the evaluation of parameters in a finite data setting, some bias will remain. Model combination can reduce this bias by considering multiple parameters simultaneously. In theory, a combination should more closely approximate the true distribution than a model fitted with the optimal threshold. As such, Section \ref{sec:Data_application} illustrates both the threshold identification and accuracy of the combinations compared to single thresholds.

\section{Data application}\label{sec:Data_application}

\subsection{Simulation study}

To illustrate threshold selection using Proposition \ref{prop:Dirichletthreshold} and Corollary \ref{cor:GLUE}, we first use simulated databases for which we know the true optimal threshold(s). 

\subsubsection{Homogeneous example}

We start with a homogeneous simulated dataset built with Equation (\ref{eq:mixture}), where the bulk of the data follows a lognormal distribution with parameters $\mu=5, \sigma=1$, and the tail, representing 5\% of the data, follows a GPD with parameters $u = 700, \sigma_u = 600$, and $\xi = 0.2$. Figure \ref{fig:MRL_simulation} illustrates the MRL plot for this simulated data. From this graph, plausible points where we see changes in slope followed by an approximately linear portion are slightly under 600, a clear break at 700, then at 1000.

\begin{figure}[h]
    \centering
    \includegraphics[width=0.7\linewidth]{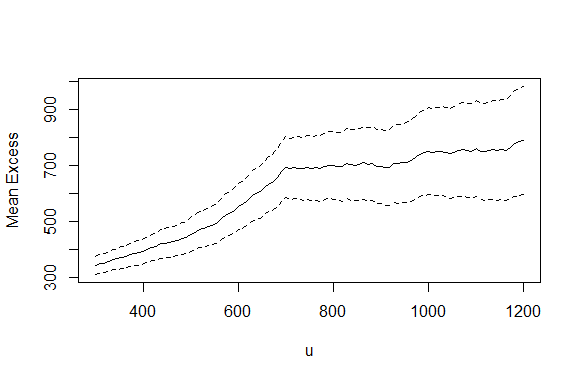}
    \caption{MRL for a mixture of a LN($\mu=5,\sigma=1$) and GPD($u=700,\sigma_u=600,\xi=0.2$)}
    \label{fig:MRL_simulation}
\end{figure}

To use Corollary \ref{cor:GLUE}, we need a wide range of values above and below the threshold. Consider three different ranges: 200 to 800, 400 to 1000, and 600 to 1200. In each case, the optimal threshold of 700 lies within the upper, middle, or lower portion of the specified range. The resulting model weights, shown in Figure \ref{fig:Simulated_weights}, indicate that the weight reversal effectively identifies the correct threshold: when 700 is located in the middle or lower part of the range, a weight reversal occurs precisely at this threshold. For the range where 700 lies in the upper values, the weight reversal is close to, though not exactly at, 700.

\begin{figure}[h]
\centering
\begin{minipage}{0.49\linewidth}
    \includegraphics[width=\linewidth]{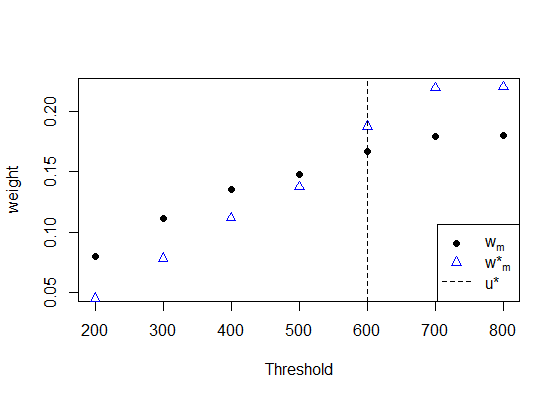}
\end{minipage}
\begin{minipage}{0.49\linewidth}
    \includegraphics[width=\linewidth]{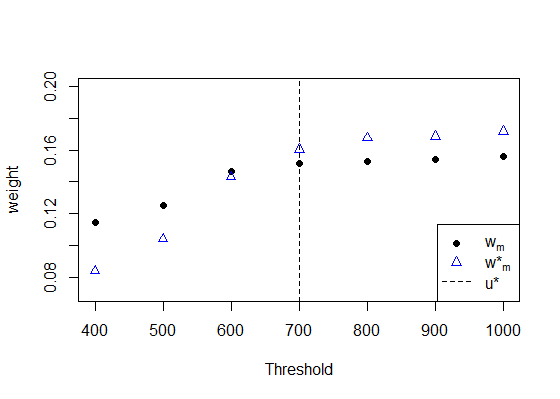}
\end{minipage}\\
\begin{minipage}{0.49\linewidth}
\centering
    \includegraphics[width=\linewidth]{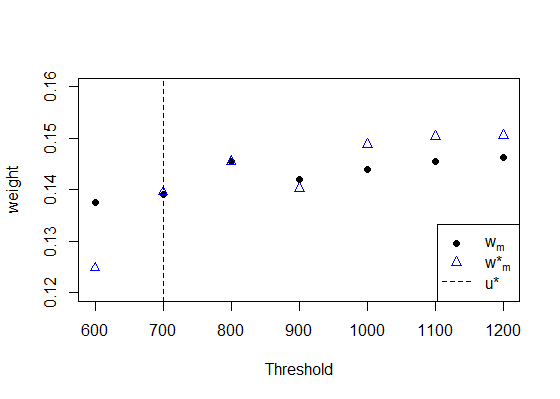}
\end{minipage}  
    \caption{Model weights $w_m$ and $w^*_m$ for thresholds ranging from 200 to 800 (top left), 400 to 1000 (top right), and 600 to 1200 (bottom) with the identified optimal threshold $u^*$}
    \label{fig:Simulated_weights}
\end{figure}

In order to support the argument that a combination can outperform a fitted model with the right threshold, consider the Hellinger distance \citep{Beran1977} and the Kullback-Leibler (KL) divergence \citep{VanErven2014}, defined in Appendix \hyperref[sec:measures]{III}. When adjusting the model with a sufficiently high threshold, the parameters for the tail of the distribution are accurate, but the estimated parameters for the lognormal bulk of the data are systematically biased in a finite dataset. Indeed, right truncation leads to the location parameter being underestimated, where
\begin{equation*}
	\hat{\mu} = \frac{\sum_{i=1}^n \ln(x_i) I(x_i < u)}{\sum_{i=1}^n I(x_i<u)} < \mu.
\end{equation*}
This should in turn lead to the distance and divergence between this model and the data being higher than with a combination of multiple thresholds.

Table \ref{tab:simulation_homogeneous_distance} shows the Hellinger distance and KL divergence comparing the empirical distribution of the test set to the distributions obtained using hom-EI and weighted hom-EI (w-hom-EI) combinations, as well as the reference value of 700, correctly identified using Corollary \ref{cor:GLUE}. In the bulk of the distribution, w-hom-EI slightly outperforms hom-EI, and both methods outperform the mixture model with the right threshold. In the tail, the model based on the identified value achieves the best performance. This is expected, as the chosen value matches the simulated true threshold, ensuring the model is unbiased for extreme values.

\begin{table}[h]
    \centering
    \begin{tabular}{ccccc}
    \toprule
    & \multicolumn{2}{c}{Bulk ($x \le 700$)} & \multicolumn{2}{c}{Tail ($x > 700$)}\\
    Combination & Hellinger ($*10^{-7}$) & KL ($*10^{-3}$) & Hellinger ($*10^{-5}$)& KL ($*10^{-2}$)\\
    \midrule
    hom-EI & 1.38 & 2.70 & 1.044 & 1.0091\\
    w-hom-EI & 1.35 & 2.65 & 1.041 & 1.0055\\
    700 & 1.95 & 3.84 & 0.010 & 0.0095\\
    \bottomrule
    \end{tabular}
    \caption{Hellinger distance and Kullback-Leibler divergence for the bulk under 700 and the tail beyond 700 by combination method}
    \label{tab:simulation_homogeneous_distance}
\end{table}

Given that the database is simulated, we can further confirm these results by simulating 10,000 multiple databases to obtain the average threshold as well as the proportion where hom-EI and w-hom-EI outperformed the single threshold. We find a mean of 697 and standard deviation of 17, with the proportions where the combinations outperformed the single threshold in Table \ref{tab:proportions_simulation_homogeneous}. These proportions lend weight to the intuition that combinations reduce bias for the bulk of the data, but cannot perform better than the identified threshold for the tail, given that this threshold optimises the bias-variance tradeoff. We also see that w-hom-EI tends to perform better than hom-EI.

\begin{table}[h]
    \centering
    \begin{tabular}{ccccc}
     \toprule
    & \multicolumn{2}{c}{Bulk ($x \le 700$)} & \multicolumn{2}{c}{Tail ($x > 700$)}\\
    Combination & Hellinger & KL & Hellinger & KL\\
    \midrule
    hom-EI & 89.76\% & 92.56\% & 3.69\% & 3.68\%\\
    w-hom-EI & 99.99\% & 99.99\% & 3.79\% & 3.80\%\\
    \bottomrule
    \end{tabular}
    \caption{Proportion of simulations where the combination method outperforms the identified threshold for a homogeneous database}
    \label{tab:proportions_simulation_homogeneous}
\end{table}

\subsubsection{Heterogeneous example}

We now turn to a heterogeneous example to illustrate Proposition \ref{prop:Dirichletthreshold}. Suppose that there are two types of claims, each occurring 50\% of the time. The first type is the same as the previous example, where the first 95\% follows a lognormal with $\mu=5, \sigma=1$ and the remaining 5\% tail follows a GPD with $u=700, \sigma_u=600$, and $\xi=0.2$. The second type is also a mixture with the first 95\% following a lognormal and the 5\% tail following a GPD, but with parameters $\mu=6, \sigma=0.5, u=1000, \sigma_u=500$, and $\xi=0.1$. Further suppose that initial claim appraisal allows for identifying whether the claims will be small, medium, or large, corresponding to the lowest 20\%, the next 60\%, and the highest 20\%.

Figure \ref{fig:MRL_simulation2} shows the MRL plot for the overall data. There is a clear change at 1000 followed by a straight line, but the known threshold of 700 is much less obvious. To account for all potential points where the slope changes, we select thresholds as 400 to 1600 by jumps of 200 to obtain the weights in Figure \ref{fig:Simulated_weights_EI}. The first weight reversal happens near the correct threshold of 700 and at 1000 for the first and second types of loss respectively. Note however that the result is not as obvious as the homogeneous case. This may be due to the limited predictive variables introducing substantial uncertainty, which reduces the effectiveness of the error integration approach.

\begin{figure}[H]
    \centering
    \includegraphics[width=0.7\linewidth]{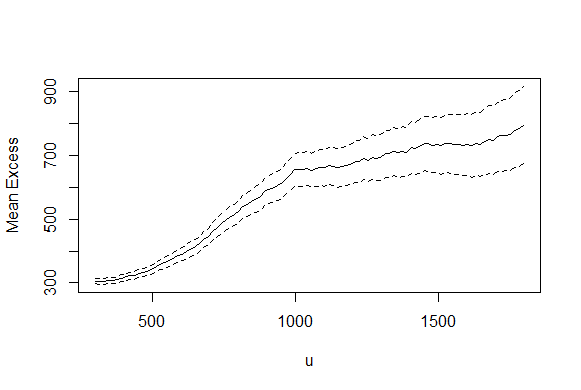}
    \caption{MRL for the second simulated database with two types of loss, where type 1 is a mixture of a LN($\mu=5,\sigma=1$) and GPD($u=700,\sigma_u=600,\xi=0.2$) and Type 2 is a mixture of a LN($\mu=6,\sigma=0.5$) and a GPD($u=1000,\sigma_u=500,\xi=0.1$)}
    \label{fig:MRL_simulation2}
\end{figure}

\begin{figure}[H]
    \centering
    \begin{minipage}{0.49\linewidth}
        \includegraphics[width=\linewidth]{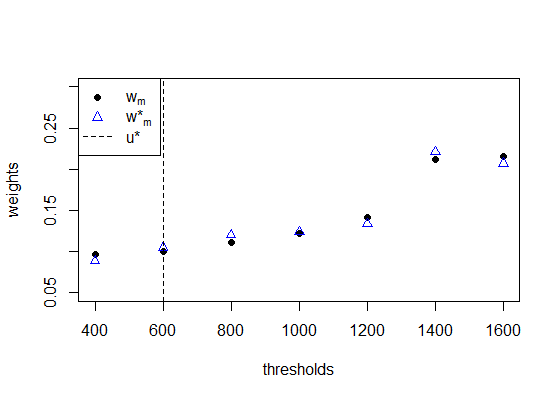}
    \end{minipage}
    \begin{minipage}{0.49\linewidth}
        \includegraphics[width=\linewidth]{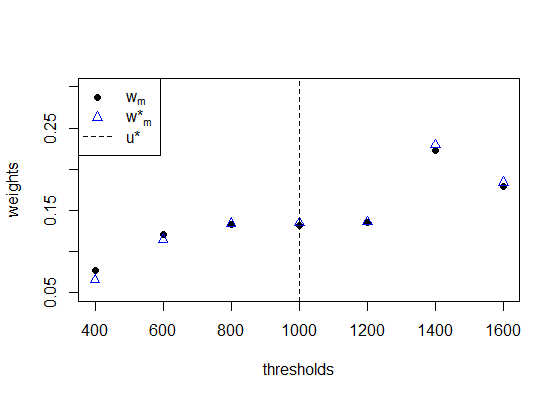}
    \end{minipage}
    \caption{Model weights $w_m$ and $w_m^*$ for Type 1 (left) and Type 2 (right) with identified thresholds $u^*$}
    \label{fig:Simulated_weights_EI}
\end{figure}

We can again compare distance and divergence measures for the (weighted) error integration combination against the threshold identified using Proposition \ref{prop:Dirichletthreshold}. The results, presented in Table \ref{tab:simulation_heterogeneous_distance}, show that model combination outperforms the single value model in the bulk of the distribution for type 1 but not for type 2. This difference may be attributed to the limited predictive variables, where variations in cutoff points and simulation parameters lead to better performance of the het-EI algorithm for type 1 than for type 2. This makes sense, as the identified thresholds are near the true values used to simulate data in this case, such that the tail should have very low bias with the single threshold.

\begin{table}[h]
    \centering
    \resizebox{\linewidth}{!}{
    \begin{tabular}{cccccc}
    \toprule
    & & \multicolumn{2}{c}{Bulk ($x \le u^{(k)}$)} & \multicolumn{2}{c}{Tail ($x > u^{(k)}$)}\\
    & Threshold & Hellinger ($*10^{-7}$) & KL ($*10^{-3}$) & Hellinger ($*10^{-4}$)& KL\\
    \midrule
    \multirow{3}{*}{\centering Type 1} & het-EI & 1.32 & 2.51 & 2.90 & 0.39\\
    & w-het-EI & 1.33 & 2.55 & 2.96 & 0.40\\
    & 600 & 3.19 & 6.22 & 0.003 & 0.04\\
    \midrule
    \multirow{3}{*}{\centering Type 2} & het-EI & 4.57 & 8.78 & 9.17 & 1.61\\
    & w-het-EI & 3.60 & 6.88 & 9.01 & 1.59\\
    & 1000 & 1.12 & 2.18 & 0.003 & 0.0003\\
    \bottomrule
    \end{tabular}
    }
    \caption{Hellinger distance and Kullback-Leibler divergence for the bulk under $u^{(k)}$ and tail above $u^{(k)}$ by type of loss $k$, where $u^{(1)*}=600$ and $u^{(2)*}=1000$}
    \label{tab:simulation_heterogeneous_distance}
\end{table}

Similarly to the homogeneous case, we can use the artificial nature of the data to run 10,000 simulations. For Type 1, we obtain a mean of 674 with standard deviation of 100, while for Type 2 we obtain a mean of 978 and standard deviation of 182. In both cases, the mean is close to the true value. We can further obtain the proportions in Table \ref{tab:proportions_simulation_heterogeneous}. These proportions further suggest that while w-het-EI outperforms het-EI, the limited predictive variables make the algorithm less optimal for type 2 losses.

\begin{table}[h]
    \centering
    \begin{tabular}{cccccc}
    \toprule
    & & \multicolumn{2}{c}{Bulk ($x \le u^{(k)}$)} & \multicolumn{2}{c}{Tail ($x > u^{(k)}$)}\\
    & Threshold & Hellinger & KL & Hellinger & KL\\
    \midrule
    \multirow{2}{*}{\centering Type 1} & het-EI & 99.88\% & 99.88\% & 0\% & 0.04\%\\
    & w-het-EI & 99.96\% & 99.96\% & 0\% & 0.05\%\\
    \midrule
    \multirow{2}{*}{\centering Type 2} & het-EI & 23.28\% & 24.72\% & 0\% & 0.32\%\\
    & w-het-EI & 32.28\% & 32.64\% & 0\% & 0.33\%\\
    \bottomrule
    \end{tabular}
    \caption{Proportion of simulations where the combination method outperforms the identified threshold for a heterogeneous database}
    \label{tab:proportions_simulation_heterogeneous}
\end{table}

\subsection{Actuarial data}

The simulation studies have demonstrated how Proposition \ref{prop:Dirichletthreshold} and Corollary \ref{cor:GLUE} allow for threshold identification in both homogeneous and heterogeneous settings. We can now look at real actuarial data to obtain similar results. For real data, however, the "true" optimal threshold is unknown, and we aim to compare our identified values with those derived from automatic selection methods available in the literature.

\subsubsection{Homogeneous database}

We first look at a well-known reinsurance dataset, namely the Danish data \citep{Mcneil1997}. There are 2,167 losses between 1 and 263 million Danish krones, expressed in millions. We randomly split the data into a training and testing set, where both sets have approximately the same size. We then fit mixture models where we suppose that the bulk of the data follows a lognormal distribution and the tail follows a GPD. \cite{Embrechts2013} suggested that a threshold of 10 or 18 (million) is appropriate for this dataset based on the MRL plot shown in Figure \ref{fig:DanishMRL}. We can see that around the suggested points, the mean excess is approximately linear from 10 to 18, then from 18 to 30 with a different slope. This method of selecting a threshold is highly subjective, but gives a reasonable idea of where the threshold might be.
\begin{figure}[H]
    \centering
    \includegraphics[width=0.7\linewidth]{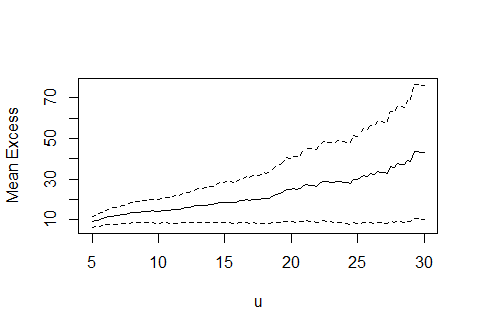}
    \caption{Danish MRL plot}
    \label{fig:DanishMRL}
\end{figure}

With the weights illustrated in Figure \ref{fig:Danishweights}, we can apply Corollary \ref{cor:GLUE} to obtain the same optimal threshold of 10 (million).
\begin{figure}[H]
    \centering
    \includegraphics[width=0.7\linewidth]{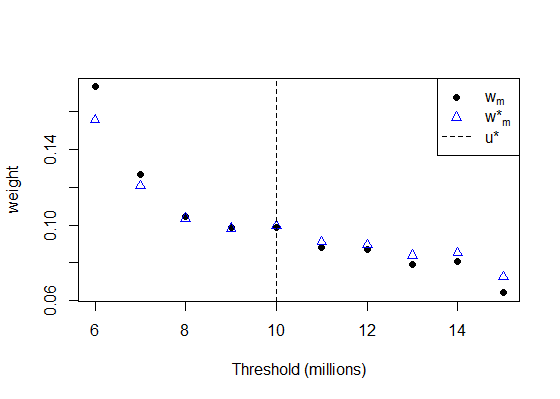}
    \caption{Danish model weights for thresholds from 6 to 15}
    \label{fig:Danishweights}
\end{figure}

Table \ref{tab:thresholdsDanish} further highlights a range of potential thresholds using an Anderson-Darling ForwardStop (FS) algorithm \citep{Bader2018} as well as two methods available in the "\hyperlink{https://cran.r-project.org/web/packages/tea/tea.pdf}{tea}" package in R \citep{Ossberger2020} along with their corresponding quantiles. The Anderson-Darling ForwardStop (FS) algorithm involves fitting a GPD to data above a set of increasing thresholds $\{u_1, \ldots, u_n\}$, then using the Anderson-Darling p-value to identify the first threshold $\hat{k}_{FS}$ that exceeds a certain significance level $\alpha$, where
\begin{equation}\label{eq:FS}
    \hat{k}_{FS} = \max \Bigl\{k \in \{1, \ldots, n : -\frac{1}{k} \sum_{i=1}^k \log(1-p_i) \leq \alpha\}\Bigr\}.
\end{equation}
Applying the FS method with a 5\% significance level across the 85th to 99th quantiles yields a threshold of 16.55, compared to a threshold of 10 obtained using our approach. We select two approaches from the R package: the $DK$ method \citep{Drees1998} and the $hall$ method \citep{Hall1990}. The $DK$ method employs a sequential estimator to determine the optimal sample fraction of the largest order statistics, resulting in a threshold of 14.4. In contrast, the $hall$ method uses a bootstrap evaluation of the AMSE criterion of the Hill estimator to identify this fraction, yielding a value of 12.1. The bulk and tail KL divergence for the test set is reported for each method, using our identified threshold as the separation point.

Similarly to the simulation study, the model using the optimal threshold provides the best fit for the tail, while model combination better captures the bulk of the distribution. Notably, model combination outperforms all methods other than the optimal threshold for both the bulk and the tail. This highlights that while Corollary \ref{cor:GLUE} effectively identifies the optimal threshold, the combination approach mitigates sensitivity to this choice. However, it is important to acknowledge significant uncertainty in the tail estimates due to the limited data and the empirical nature of the density comparisons.

\begin{table}[H]
    \centering
    \begin{tabular}{ccccc}
    \toprule
    & & & \multicolumn{2}{c}{KL divergence}\\
    Method & Threshold & Quantile & Bulk ($x\le 10$) & Tail ($x>10$)\\
    \midrule
    hom-EI & \multicolumn{2}{c}{Combination} & 0.17 & 1.38\\
    w-hom-EI & \multicolumn{2}{c}{Combination} & 0.17 & 1.38\\
    MRL & 10 & 0.95 & 0.19 & 1.33\\
    FS & 16.55 & 0.98 & 0.20 & 2.14\\
    DK & 14.4 & 0.99 & 0.20 & 2.10\\
    hall & 12.1 & 0.96 & 0.19 & 1.80\\
    \bottomrule
    \end{tabular}
    \caption{KL divergence depending on each method's identified threshold for the bulk ($x\le 10$) and tail ($x>10$) of the Danish test set}
    \label{tab:thresholdsDanish}
\end{table}

\subsubsection{Heterogeneous database}

Next, we work with an automobile claims dataset from a Canadian insurer. We have data from over one million claims from 2015 to 2021 for multiple coverages. We choose to study only Vehicle Damage claims in Ontario, which has claims between 2 and 561,000 dollars. Note that although this coverage is capped by the vehicle’s value, we do not select Accident Benefits or Bodily Injury, as these coverages involve preset payment limits, resulting in point masses that complicate threshold identification. In contrast, Vehicle Damage follows a more continuous distribution, avoiding the need for specialised handling of point masses.

We separate data into a training and testing set by taking historical data from 2015 to 2019 as the training data and the more recent 2020 and 2021 losses as the testing data. The training data is further separated to include a calibration component from which combination weights can be calculated by randomly sampling 30\% of claims in the training set.

We set potential thresholds as the 50th to 95th quantiles by jumps of 5\%. We choose these jumps rather than the standard 2.5\% in automatic threshold selection methods to reduce computation time. To use model combination, for each model with a different threshold, we fit a GAMLSS lognormal model on the bulk and a GAM version of the GPD on the tail. For ease of interpretability, Figure \ref{fig:thresholdtype} presents the Dirichlet results of the error integration BMA algorithm, comparing weight variation for two instances of a categorical variable, which we will call "A" and "B", with significantly different distributions, as seen from their quantiles in Table \ref{tab:quartile} and MRL plots in Figure \ref{fig:MRLCoop}. 
\begin{table}[H]
    \centering
    \begin{tabular}{ccc}
    \hline
     & A & B\\
     \hline
    25\% & 217 & 750\\
    50\% & 1485 & 2666\\
    75\% & 4779 & 5994\\
    Variance & $1.3*10^8$ & $7.5*10^7$\\
    \hline
    \end{tabular}
    \caption{Quartile and variance values by category}
    \label{tab:quartile}
\end{table}

\begin{figure}[H]
    \centering
    \includegraphics[width=0.7\linewidth]{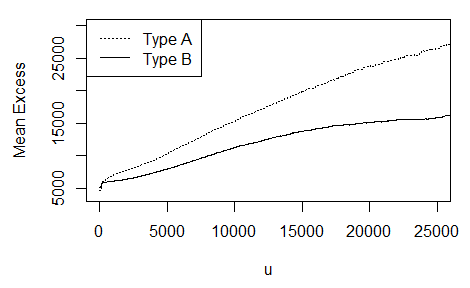}
    \caption{MRL plot by category}
    \label{fig:MRLCoop}
\end{figure}

The threshold identified by our method varies based on the predictive variable, showing a higher threshold for category "B" than for category "A".

\begin{figure}[H]
    \centering
    \begin{minipage}{0.49\linewidth}
        \includegraphics[width=\linewidth]{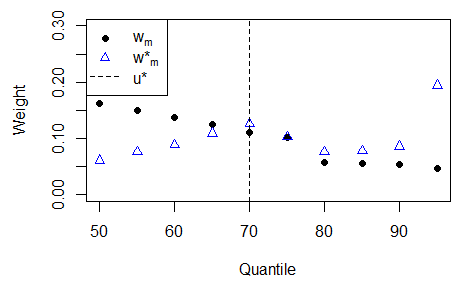}
    \end{minipage}
    \begin{minipage}{0.49\linewidth}
        \includegraphics[width=\linewidth]{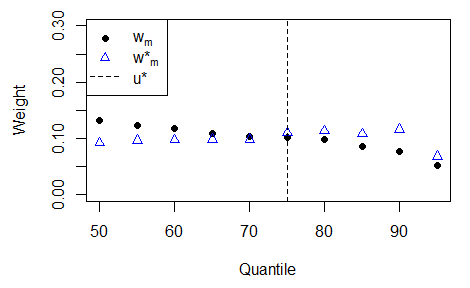}
    \end{minipage}
    \caption{Weights by threshold quantile for categories "A" (left) and "B" (right)}
    \label{fig:thresholdtype}
\end{figure}

Finally, we can once again look at distance and divergence measures to evaluate the combination on a test set compared to the identified threshold. We see in Table \ref{tab:Hellinger_Coop} that model combination outperforms the identified threshold in both the bulk and the tail. Although the tail result seems to contradict the previous simulation study, this can be explained by additional explanatory variables allowing EI to reduce bias in the tail, as well as significant uncertainty due to evaluating with an empirical density with limited data.

\begin{table}[H]
    \centering
    \begin{tabular}{cccccc}
    \hline
    & & \multicolumn{2}{c}{Bulk ($x \le u^{(k)}$)} & \multicolumn{2}{c}{Tail ($x > u^{(k)}$)}\\
    & Method & Hellinger & KL & Hellinger & KL\\
     \hline
    \multirow{3}{*}{\centering Type A} & het-EI & 5.96 & 5.59 & 0.83 & 2.07\\
    & w-het-EI & 5.97 & 5.73 & 0.53 & 1.82\\
    & Identified threshold & 6.21 & 5.93 & 0.82 & 2.96\\
    \hline
    \multirow{3}{*}{\centering Type B} & het-EI & 6.71 & 6.12 & 0.53 & 1.49\\
    & w-het-EI & 6.48 & 5.98 & 0.41 & 1.41\\
    & Identified threshold & 6.85 & 6.32 & 0.59 & 2.15\\
    \hline
    \end{tabular}
    \caption{Hellinger distance $(\times 10^{-4})$ and KL divergence $(\times 10^{-2})$ by combination method}
    \label{tab:Hellinger_Coop}
\end{table}

\section{Conclusion}\label{sec:Conclusion}

In this paper, we proposed modifications to a BMA algorithm, namely the error integration approach proposed by \cite{Jessup2025}. These modifications led to Proposition \ref{prop:Dirichletthreshold} and Corollary \ref{cor:GLUE}, which allowed for identifying the optimal threshold minimising the bias-variance tradeoff in the GPD tail of a distribution. Proposition \ref{prop:Dirichletthreshold} further allowed for these values to depend on predictive variables.

The efficiency of this threshold identification was demonstrated using simulation studies. It was further shown that even with the optimal threshold, when using mixture models, combinations outperformed single threshold models for the bulk of the distribution, and performed better in the tail than models with incorrectly specified thresholds. This was further highlighted using real data with the Danish reinsurance dataset as well as a vehicale damage dataset. These results suggest that while Proposition \ref{prop:Dirichletthreshold} and Corollary \ref{cor:GLUE} identify the optimal threshold, model combination reduces sensitivity to threshold selection.

For future work, it would be interesting to consider multiple coverages simultaneously, such as different car insurance coverages known to be correlated. Dependence could affect threshold values, requiring multiple coverages to be considered together. While this does not affect marginal distributions, it could improve overall loss projections. Additionally, applying our flexible threshold approach to datasets like extreme precipitation, where different stations are treated as variables rather than separately, could increase data availability and enhance analysis.

\newpage
\bibliography{FullListReference}

\begin{thebibliography}{}

\bibitem[Azzalini, 1985]{Azzalini1985}
Azzalini, A. (1985).
\newblock A class of distributions which includes the normal ones.
\newblock {\em Scandinavian journal of statistics}, 12(2):171--178.

\bibitem[Bader et~al., 2018]{Bader2018}
Bader, B., Yan, J., and Zhang, X. (2018).
\newblock Automated threshold selection for extreme value analysis via ordered
  goodness-of-fit tests with adjustment for false discovery rate.
\newblock {\em Annals of Applied Statistics}, 12(1):310--329.

\bibitem[Benito et~al., 2023]{Benito2023}
Benito, S., L{\'o}pez-Mart{\'\i}n, C., and Navarro, M.~{\'A}. (2023).
\newblock {Assessing the importance of the choice threshold in quantifying
  market risk under the POT approach (EVT)}.
\newblock {\em Risk Management}, 25(6).

\bibitem[Beran, 1977]{Beran1977}
Beran, R. (1977).
\newblock Minimum hellinger distance estimates for parametric models.
\newblock {\em The annals of Statistics}, 5(3):445--463.

\bibitem[Caeiro and Gomes, 2015]{Caeiro2015}
Caeiro, F. and Gomes, M.~I. (2015).
\newblock Threshold selection in extreme value analysis.
\newblock {\em Extreme value modeling and risk analysis: Methods and
  applications}, pages 69--87.

\bibitem[CatIQ, 2024]{CatIQ2024}
CatIQ (2024).
\newblock {Canadian Insured Losses From Catastrophic Events Exceed CAN \$3
  Billion In 2023}.
\newblock
  \url{https://public.catiq.com/2024/01/08/canadian-insured-losses-from-catastrophic-events-exceed-can-3-billion-in-2023/}.

\bibitem[Chavez-Demoulin et~al., 2006]{Chavez2006}
Chavez-Demoulin, V., Embrechts, P., and Ne{\v{s}}lehov{\'a}, J. (2006).
\newblock Quantitative models for operational risk: extremes, dependence and
  aggregation.
\newblock {\em Journal of Banking \& Finance}, 30(10):2635--2658.

\bibitem[Coles et~al., 2001]{Coles2001}
Coles, S., Bawa, J., Trenner, L., and Dorazio, P. (2001).
\newblock {\em An introduction to statistical modeling of extreme values}.
\newblock Springer.

\bibitem[Curceac et~al., 2020]{Curceac2020}
Curceac, S., Atkinson, P.~M., Milne, A., Wu, L., and Harris, P. (2020).
\newblock An evaluation of automated gpd threshold selection methods for
  hydrological extremes across different scales.
\newblock {\em Journal of Hydrology}, 585, article 124845.

\bibitem[Drees and Kaufmann, 1998]{Drees1998}
Drees, H. and Kaufmann, E. (1998).
\newblock Selecting the optimal sample fraction in univariate extreme value
  estimation.
\newblock {\em Stochastic Processes and their applications}, 75(2):149--172.

\bibitem[Embrechts et~al., 2013]{Embrechts2013}
Embrechts, P., Kl{\"u}ppelberg, C., and Mikosch, T. (2013).
\newblock {\em Modelling extremal events: for insurance and finance},
  volume~33.
\newblock Springer Science \& Business Media.

\bibitem[Fragoso et~al., 2018]{Fragoso2018}
Fragoso, T.~M., Bertoli, W., and Louzada, F. (2018).
\newblock Bayesian model averaging: A systematic review and conceptual
  classification.
\newblock {\em International Statistical Review}, 86(1):1--28.

\bibitem[Ghaddab et~al., 2023]{Ghaddab2023}
Ghaddab, S., Kacem, M., de~Peretti, C., and Belkacem, L. (2023).
\newblock Extreme severity modeling using a glm-gpd combination: application to
  an excess of loss reinsurance treaty.
\newblock {\em Empirical Economics}, 65(3):1105--1127.

\bibitem[Hall, 1990]{Hall1990}
Hall, P. (1990).
\newblock Using the bootstrap to estimate mean squared error and select
  smoothing parameter in nonparametric problems.
\newblock {\em Journal of multivariate analysis}, 32(2):177--203.

\bibitem[He et~al., 2022]{He2022}
He, Y., Peng, L., Zhang, D., and Zhao, Z. (2022).
\newblock Risk analysis via generalized pareto distributions.
\newblock {\em Journal of Business \& Economic Statistics}, 40(2):852--867.

\bibitem[Jessup et~al., 2023]{Jessup2023}
Jessup, S., Mailhot, M., and Pigeon, M. (2023).
\newblock Impact of combination methods on extreme precipitation projections.
\newblock {\em Annals of Actuarial Science}, 17(3):459--478.

\bibitem[Jessup et~al., 2025]{Jessup2025}
Jessup, S., Mailhot, M., and Pigeon, M. (2025).
\newblock Uncertainty in heteroscedastic bayesian model averaging.
\newblock {\em Insurance: Mathematics and Economics}, 121:63--78.

\bibitem[Laudag{\'e} et~al., 2019]{Laudage2019}
Laudag{\'e}, C., Desmettre, S., and Wenzel, J. (2019).
\newblock Severity modeling of extreme insurance claims for tariffication.
\newblock {\em Insurance: Mathematics and Economics}, 88:77--92.

\bibitem[Li and Liu, 2023]{Li2023}
Li, J. and Liu, J. (2023).
\newblock Claims modelling with three-component composite models.
\newblock {\em Risks}, 11(11):196--211.

\bibitem[Li et~al., 2016]{Li2016}
Li, Y., Tang, N., and Jiang, X. (2016).
\newblock Bayesian approaches for analyzing earthquake catastrophic risk.
\newblock {\em Insurance: Mathematics and Economics}, 68:110--119.

\bibitem[MacDonald et~al., 2011]{Macdonald2011}
MacDonald, A., Scarrott, C.~J., Lee, D., Darlow, B., Reale, M., and Russell, G.
  (2011).
\newblock A flexible extreme value mixture model.
\newblock {\em Computational Statistics \& Data Analysis}, 55(6):2137--2157.

\bibitem[McNeil, 1997]{Mcneil1997}
McNeil, A.~J. (1997).
\newblock Estimating the tails of loss severity distributions using extreme
  value theory.
\newblock {\em ASTIN Bulletin}, 27(1):117--137.

\bibitem[Northrop et~al., 2017]{Northrop2017}
Northrop, P.~J., Attalides, N., and Jonathan, P. (2017).
\newblock Cross-validatory extreme value threshold selection and uncertainty
  with application to ocean storm severity.
\newblock {\em Journal of the Royal Statistical Society Series C: Applied
  Statistics}, 66(1):93--120.

\bibitem[Ossberger, 2020]{Ossberger2020}
Ossberger, J. (2020).
\newblock Package ‘tea’.
\newblock \url{https://cran.wustl.edu/web/packages/tea/tea.pdf}.

\bibitem[Pickands, 1975]{Pickands1975}
Pickands, J. (1975).
\newblock Statistical inference using extreme order statistics.
\newblock {\em the Annals of Statistics}, 3(1):119--131.

\bibitem[Raftery et~al., 1997]{Raftery1997}
Raftery, A.~E., Madigan, D., and Hoeting, J.~A. (1997).
\newblock Bayesian model averaging for linear regression models.
\newblock {\em Journal of the American Statistical Association},
  92(437):179--191.

\bibitem[Rigby and Stasinopoulos, 2005]{Rigby2005}
Rigby, R.~A. and Stasinopoulos, D.~M. (2005).
\newblock Generalized additive models for location, scale and shape.
\newblock {\em Journal of the Royal Statistical Society Series C: Applied
  Statistics}, 54(3):507--554.

\bibitem[Silva~Lomba and Fraga~Alves, 2020]{Silva2020}
Silva~Lomba, J. and Fraga~Alves, M.~I. (2020).
\newblock L-moments for automatic threshold selection in extreme value
  analysis.
\newblock {\em Stochastic environmental research and risk assessment},
  34(3):465--491.

\bibitem[Thiombiano et~al., 2017]{Thiombiano2017}
Thiombiano, A.~N., El~Adlouni, S., St-Hilaire, A., Ouarda, T.~B., and El-Jabi,
  N. (2017).
\newblock {Nonstationary frequency analysis of extreme daily precipitation
  amounts in Southeastern Canada using a peaks-over-threshold approach}.
\newblock {\em Theoretical and Applied Climatology}, 129:413--426.

\bibitem[Van~der Vaart, 2000]{Vandervaart2000}
Van~der Vaart, A.~W. (2000).
\newblock {\em Asymptotic statistics}, volume~3.
\newblock Cambridge university press.

\bibitem[Van~Erven and Harremos, 2014]{VanErven2014}
Van~Erven, T. and Harremos, P. (2014).
\newblock {R{\'e}nyi divergence and Kullback-Leibler divergence}.
\newblock {\em IEEE Transactions on Information Theory}, 60(7):3797--3820.

\bibitem[Youngman, 2019]{Youngman2019}
Youngman, B.~D. (2019).
\newblock Generalized additive models for exceedances of high thresholds with
  an application to return level estimation for us wind gusts.
\newblock {\em Journal of the American Statistical Association},
  114(528):1865--1879.

\bibitem[Youngman, 2022]{Youngman2022}
Youngman, B.~D. (2022).
\newblock evgam: An r package for generalized additive extreme value models.
\newblock {\em Journal of Statistical Software}, 103:1--26.

\bibitem[Zeng and Lin, 2007]{Zeng2007}
Zeng, D. and Lin, D. (2007).
\newblock Maximum likelihood estimation in semiparametric regression models
  with censored data.
\newblock {\em Journal of the Royal Statistical Society Series B: Statistical
  Methodology}, 69(4):507--564.

\bibitem[Zhu et~al., 2013]{Zhu2013}
Zhu, J., Forsee, W., Schumer, R., and Gautam, M. (2013).
\newblock Future projections and uncertainty assessment of extreme rainfall
  intensity in the united states from an ensemble of climate models.
\newblock {\em Climatic Change}, 118:469--485.

\end{thebibliography}
\bibliographystyle{apalike}

\newpage
\section*{Appendix I - The skew-normal distribution}\label{sec:skewnorm}

The skew-normal distribution, first introduced by \cite{Azzalini1985}, is defined as
\begin{equation*}
    f(x) = \frac{2}{\omega}\phi\left(\frac{x-\xi}{\omega}\right)\Phi\left(\alpha\frac{x-\xi}{\omega}\right),
\end{equation*}
where $\xi$, $\omega$, and $\alpha$ are respectively the location, scale, and shape parameters. $\phi$ and $\Phi$ are the standard normal density and cumulative distribution functions. The relationship between $\xi$ and $\omega$ and the usual normal parameters $\mu$ and $\sigma$ can be expressed through the mean and variance, such that
\begin{align*}
    \mu &= \xi + \omega \frac{\alpha}{\sqrt{1+\alpha^2}}\sqrt{\frac{2}{\pi}}\\
    \sigma^2 &= \omega^2\left(1-\frac{2}{\pi}\frac{\alpha^2}{1+\alpha^2}\right).
\end{align*}
In particular, if $\alpha=0$, then $\xi=\mu$ and $\omega=\sigma$.

The equivalence with the normal distribution can easily be shown by setting $\alpha=0$ as follows.
\begin{align*}
    f(x) &= \frac{2}{\omega}\phi\left(\frac{x-\xi}{\omega}\right)\Phi\left(\alpha\frac{x-\xi}{\omega}\right)\\
    &= \frac{1}{\omega}\phi\left(\frac{x-\xi}{\omega}\right)\\
    &= \frac{1}{\sigma}\phi\left(\frac{x-\mu}{\sigma}\right)\\
    &= \frac{1}{\sqrt{2\pi}\sigma} \exp\left(-\frac{(x-\mu)^2}{2\sigma^2}\right)
\end{align*}

\newpage
\section*{Appendix II - Skew-normal Generalised Likelihood Uncertainty Estimation algorithm}\label{sec:SNGLUE}

\hyperref[tab:algoGLUE]{Algorithm 1} describes a skew-normal approach to GLUE under BMA, where $y^{(q)}$ is the observed $q^\text{th}$ quantile, $y_b^{(q)}$ is the $q^\text{th}$ quantile of the $b^\text{th}$ bootstrap resampling of data $\mathcal{D}$ and $\hat{y}_{m,b}^{(q)}$ is a similar quantile for the $m^\text{th}$ model, with $B$ bootstrap iterations and $Q$ quantiles.
\vspace{-2em}
\begin{center}
    \captionof*{table}{\label{tab:algoGLUE}}
    \resizebox{0.85\textwidth}{!}{
\begin{tabularx}{\textwidth}{l}
        \toprule
        \textbf{Algorithm 1}: Skewed Generalised Likelihood Uncertainty Estimation\\
	\midrule
	1: Resample $\mathcal{D}$ to obtain $B$ bootstrap iterations $y_b^{(q)}$ of the $q^\text{th}$ quantile.\\
	2: Calculate the variance for quantile $q$ as $\sigma^2_q = \frac{1}{B-1} \sum_{b=1}^{B} \left(y_b^{(q)}-\frac{1}{B}\sum_{i=1}^{B}y_i^{(q)}\right)^2$.\\
    3: Calculate the skewness for quantile $q$ as $\gamma_q = \frac{1}{B}\frac{\sum_{b=1}^B \left(y_b^{(q)}-\frac{1}{B}\sum_{i=1}^{B}y_i^{(q)}\right)^3}{\sigma^3_q}$.\\
    4: Calculate the skew-normal parameters as:\\
        \hspace{2cm}$\begin{aligned}
            \delta_q &= \sqrt{\frac{\pi |\gamma_q^{1.5}|}{2(|\gamma_q|^{1.5}+((4-\pi)/2)^{2/3})}}\\
		\alpha_q &= \frac{\delta_q}{\sqrt{1-\delta_q^2}}\\
		\omega_q &= \sqrt{\frac{\pi*\sigma^2_q}{\pi-2\alpha_q^2/(1+\alpha_q^2)}}\\
		\xi_q &= y^{(q)} - \omega_q\sqrt{\frac{2\alpha_q}{\pi(1+\alpha_q^2)}}
        \end{aligned}$\\
    5: Calculate the likelihood and weighted-likelihood assuming residuals follow a skew-\\
    normal distribution, with $\phi$ and $\Phi$ the standard normal density and cumulative\\ 
    function respectively:\\
        \begin{minipage}{0.96\linewidth}
            \begin{equation*}
            L(\hat{y}_m^{(q)}) = \frac{2}{\omega_q} \left(\prod_{b=1}^B \phi\left(\frac{\hat{y}_{m,b}^{(q)}-\xi_q}{\omega_q}\right)\Phi\left(\alpha_q \left(\frac{\hat{y}_{m,b}^{(q)}-\xi_q}{\omega_q}\right)\right)\right)^{1/B}
            \end{equation*}
            \hspace{2cm}$\begin{aligned}
                L(\mathcal{D}|\mathcal{M}_m) &= \frac{1}{Q} \sum_{q=1}^{Q} L(\hat{y}_m^{(q)}).\\
            L^*(\mathcal{D}|\mathcal{M}_m) &= \sum_{q=1}^{Q} \frac{\hat{y}_m^{(q)}}{\sum_{j=1}^Q \hat{y}_m^{(j)}} L(\hat{y}_m^{(q)})
            \end{aligned}$
        \end{minipage}\\
	6: Update the probability of each model as\\
    \hspace{2cm}$\begin{aligned}
        w_m &= \frac{L(\mathcal{D}|\mathcal{M}_m) \Pr(\mathcal{M}_m)}{\sum_{l=1}^{M}L(\mathcal{D}|\mathcal{M}_l)\Pr(\mathcal{M}_l)},\\
        w^*_m &= \frac{L^*(\mathcal{D}|\mathcal{M}_m) \Pr(\mathcal{M}_m)}{\sum_{l=1}^{M}L^*(\mathcal{D}|\mathcal{M}_l)\Pr(\mathcal{M}_l)}.
    \end{aligned}$\\
    7: Identify the optimal threshold $u^*$ with corresponding model $\mathcal{M}_{m^*}$ as\\
        \hspace{2cm}$\begin{aligned}
            u^* = \argmin_m(u_m : w^*_m \ge w_m).
        \end{aligned}$\\
        8: Calculate posterior distributions as\\
        \hspace{2cm}$\begin{aligned}
            \Pr(y|\mathcal{D}) &= \sum_{m=1}^M \Pr(y|\mathcal{M}_m)\Pr(\mathcal{M}_m|\mathcal{D}),\\
                \Pr\,_{u^*}(y|\mathcal{D}) &= \Pr(y|\mathcal{M}_{m^*}).
        \end{aligned}$\\
	\bottomrule
    \end{tabularx}
}
\end{center}

\section*{Appendix III - Distance and divergence measures}\label{sec:measures}

Hellinger distance is defined (under Lebesgue measure) as
\begin{align*}
    H^2(f,g) &= \frac{1}{2} \int \left(\sqrt{f(x)}-\sqrt{g(x)}\right)^2 dx,\\
    \intertext{which can easily be shown to be equivalent to}
    &= 1-\int \sqrt{f(x)g(x)} dx, \nonumber
\end{align*}
where $H^2(f,g)=0$ if $f=g$ and $H^2(f,g)=1$ is the case where $f$ and $g$ have entirely different supports.

KL divergence between distributions $\mathbb{P}$ and $\mathbb{Q}$ is defined as
\begin{equation*}
	D_{KL}(\mathbb{P} || \mathbb{Q}) = \int f_{\mathbb{P}}(x) \log\left(\frac{f_{\mathbb{P}}(x)}{f_{\mathbb{Q}}(x)}\right) dx.
\end{equation*}

\end{document}